\documentclass{ifacconf}

\usepackage{graphicx}      
\usepackage{natbib}        

\usepackage{graphicx}

\usepackage{savesym}
\savesymbol{AND}
\usepackage{algorithm, algorithmic, setspace}
\usepackage{graphicx} 
\usepackage{amsmath} 
\usepackage{amssymb}  
\usepackage{amsfonts}
\usepackage{color}
\usepackage[english]{babel}

\newcommand{\argmin}[1]{\underset{#1}{\mathrm{argmin\,}}}

\newcommand{\atrue}{\widetilde{a}}

\newcommand{\Aa}{\widetilde{A}}
\newcommand{\Ca}{\widetilde{C}}

\newcommand{\sign}{\text{sign}}

\newcommand{\zstar}{z^{\star}}
\newcommand{\reg}{r}

\newcommand{\R}{\mathbb{R}}

\newcommand{\Z}{\mathbb{Z}}
\newcommand{\fun}{\mathcal{F}}

\newcommand{\soft}{\mathrm{S}}

\renewcommand{\O}{\mathcal{O}}

\newcommand{\irr}{\rho}

\newcommand{\xhat}{\hat{x}}
\newcommand{\ahat}{\hat{a}}
\newcommand{\zhat}{\hat{z}}

\newtheorem{proposition}{Proposition}
\newtheorem{problem}{Problem}

\newtheorem{assumption}{Assumption}

\begin{document}
\begin{frontmatter}

\title{Online sparse observers for cyber-physical systems under sensor bias}
%
%
\author[Poli]{V. Cerone}
\author[Poli]{S. M. Fosson}
\author[Poli]{D. Regruto}
\author[Poli]{F. Ripa}
\address[Poli]{Department of Control and Computer Engineering,\\ Politecnico di Torino, Italy (e-mail: sophie.fosson@polito.it).}

\begin{abstract}.
The design of state observers for cyber-physical
systems under sparse sensor biases, faults, or attacks has drawn substantial attention in recent years.
While sparsity-based batch approaches, which collect multiple measurements and run offline,   are relatively
mature, online secure state estimation, for real-time state/attack recovery, is still an open problem. Although some algorithms have been proposed, convergence guarantees remain limited, even for the case of constant attacks.

As a first step toward addressing this gap, we focus on constant attacks. We analyze the observability of this setting and we study several online sparse observers, derived from different methodological frameworks such as online sparse optimization and block Bregman methods. Some of the proposed observers are adaptations of existing algorithms to the secure state-estimation setting, while others constitute novel algorithmic contributions.

The goal of this paper is to provide a unified overview of practically implementable, online sparse observers, discuss their convergence properties, and compare their performance through numerical experiments.

%
%
\end{abstract}
\begin{keyword}
Online secure state estimation, sensor attacks, sparse observers.
\end{keyword}

\end{frontmatter}

\section{Introduction}\label{sec:IN}

Cyber-physical systems (CPSs) are complex dynamical systems with a distributed architecture in which physical processes are integrated with cyber layers. Within these setups, distributed nodes interact with the physical environment through sensors and actuators, perform local computations, and coordinate each other via a shared communication network. Their decentralized nature exposes CPSs to local exogenous disturbances. For instance, system measurements, which provide information about the system state, can be significantly affected by external phenomena arising from the physical world. This includes human-in-the-loop interactions, which may either be unintentional, as a consequence of human–machine coexistence, or malicious, as in non-invasive false-data injections aimed at tampering with the measurements.

The study of cyber-physical security and attack resilience has attracted significant interest in recent years. Consider a large-scale CPS with a large number of sensor nodes deployed across a wide geographic area. Such a configuration is highly vulnerable to physical and cyber intrusions targeting a subset of sensors. A common attack scenario involves the stealthy introduction of a bias signal that alters the measurements of some sensor nodes without damaging the physical hardware; see, e.g., \cite{sho13}. Because CPSs act directly on physical processes, these disruptions can have serious consequences for both infrastructure safety and human lives; see, e.g., \cite{shi24,mak21}.

Over the last decade, substantial research has focused on quantifying the vulnerability of CPSs, including the maximum number of compromised sensor nodes that a system can tolerate;  see \cite{pas13}. A key challenge in this domain is the observability of systems under sensor data-injection attacks \cite{cho15}. This has led to the formalization of the concept of sparse observability; see \cite{sho16}. Within this framework, the attack signal can be treated either as an unknown exogenous input or as an augmented state vector to be estimated.

The problem of secure state estimation (SSE) can be formulated as the simultaneous reconstruction of both the CPS state and the exogenous/attack signal. In the literature, the sparsity assumption has been widely used to characterize the fundamental limitations of CPSs under sensor attacks; see \cite{pas13,faw14,sho16}. To exploit this sparsity, numerous SSE strategies employ $\ell_1$-norm minimization or regularization; \cite{faw14,paj17}.

The majority of existing observers for SSE rely on batch approaches, which collect a sufficiently large number $\tau$ of time measurements to estimate the $\tau+1$-delayed state. In this framework, \cite{sho16} propose an hard-thresholding observer with an event-triggered projection onto a sparse subspace.  \cite{lu19} investigate a set-cover approach to address the combinatorial complexity of identifying the attacked sensors and develop a switched observer. \cite{fox24} exploit soft-thresholding to design a sparse observer. 

However, {\it{online}} observers, 
which avoid delays and do not accumulate multiple measurements, are highly desirable in practical applications where a timely attack identification  is vital for safety.  

Theoretical guarantees for stability and convergence of online sparse observers remain limited, even in the presence of time-invariant attacks/biases. Although seemingly simple, the resulting system may be unobservable, making standard LTI Luenberger observers unsuitable for joint state and attack reconstruction.

While the observers discussed above are effective in mitigating the effects of sensor data-injection attacks even when $\tau=1$, their behavior in the presence of time-invariant biases has not been deeply investigated.

We remark that online SSE in the presence of sporadic (i.e., sparse in time) perturbations or attacks that occur unpredictably at random intervals has been studied for continuous-time and discrete-time linear time-invariant (LTI) systems by \cite{ale18} and \cite{zam25}, respectively.


In this paper, we study a class of iterative sparse optimization algorithms and adapt them as online observers for CPSs affected by unknown time-invariant sensor injections. Specifically, we investigate approaches based on proximal gradient descent, proximal alternating minimization, Bregman iterations, and the Kaczmarz method. Building on these methodologies, we tailor them to the SSE problem in CPSs and develop novel online estimation strategies.

The paper is organized as follows. In Section \ref{sec:PS}, we formulate the problem and provide the necessary theoretical background. In Section \ref{sec:PA}, we introduce the proposed online sparse observer algorithms. Section \ref{sec:NR} presents  numerical experiments that validate our approach, and Section \ref{sec:C} draws conclusion.

\section{Problem statement and observability}\label{sec:PS}
As in \cite{faw14,sho16,fox24}, we consider CPSs modeled as
\begin{equation}\label{system}
\left\{
 \begin{split}
  &x(k+1)=Ax(k)\\
  &y(k)=Cx(k)+a(k)
 \end{split}\right.
\end{equation}
where $k\in \Z_+=\{0,1,2,\dots,\}$, $A\in\R^{n,n}$, $C\in\R^{q,n}$ with $q<n$; $x(k)\in\R^n$ is the state of the CPS; $y(k)\in\R^q$ is the measurement vector; $a(k)\in\R^q$ represents an unknown exogenous disturbance vector accounting for malicious attacks, operational anomalies, and human-in-the-loop interactions. For simplicity, we refer to $a(k)$ as the {\it{attack vector}}, even though its effects may not originate from intentional adversarial actions. Each sensor $i$ takes a scalar measurement $y_i(k)$; if $a_i(k)\neq 0$, sensor $i$ is under attack.
We assume that $a(k)$ is sparse, i.e., there are few faulty sensors.
For brevity, we omit possible known inputs. Their inclusion would not alter the substance of the analysis; see, e.g., \cite{sho16}.

We consider the following assumption.
\begin{assumption}\label{ass}
The attack vector is time-invariant, i.e., for each $k=0,1,\dots$, $a(k)=\atrue\in\R^q$.
\end{assumption}

Our goal is to address the following online SSE problem.
\begin{problem}\label{p1}
At each time step $k$, estimate $x(k)$ and $\atrue$, given $A$, $C$ and $y(k)$.
\end{problem}

Even under Assumption \ref{ass}, the CPS \eqref{system} may be not observable. In fact, it behaves as a composite system as defined by  \cite{dav75}:  if the constituent subsystems share common eigenvalues, the composite system loses observability due to the indistinguishability of the output modes.
Let us define $z(k)=\begin{pmatrix} x(k)^\top,& a(k)^\top\end{pmatrix}^\top$, $\Aa = \begin{pmatrix}
    A & 0\\
    0 & I_q
       \end{pmatrix}
$ and $\Ca = \begin{pmatrix}
    C & I_q
       \end{pmatrix}$, where $I_q$ is the identity matrix of dimension $q$. Then, if $a(k)=\atrue$, CPS \eqref{system} can be written in the compact form
\begin{equation}\label{system2}
\left\{
 \begin{split}
  &z(k+1)=\Aa z(k)\\
  &y(k)=\Ca z(k)
 \end{split}\right.
\end{equation}       
The following result specializes Theorem 1 by \cite{dav75} in our framework. For completeness, we include a dedicated proof.        
\begin{proposition}
Let the attack-free CPS be observable, i.e., the pair $(A,C)$ is observable, and let $q<n$.
If $A$ has the eigenvalue $1$, the CPS \eqref{system2} with constant attacks is not observable, i.e., the pair $(\Aa,\Ca)$ is not observable.
\end{proposition}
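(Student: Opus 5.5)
The plan is to apply the Popov--Belevitch--Hautus (PBH) eigenvector test to the pair $(\Aa,\Ca)$. This test says the pair is unobservable if and only if there exist $\lambda\in\mathbb{C}$ and a nonzero $v$ with $\Aa v=\lambda v$ and $\Ca v=0$. We will build such a $v$ for $\lambda=1$ using an eigenvector of $A$.

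Since $A$ has the eigenvalue $1$, we pick a real vector $w\in\R^n$, $w\neq 0$, with $Aw=w$. A real choice is possible because the eigenvalue is real. Then we set
\begin{equation*}
v=\begin{pmatrix} w\\ -Cw\end{pmatrix}\in\R^{n+q}.
\end{equation*}
This vector is nonzero because $w\neq 0$. The block-diagonal structure of $\Aa$ gives
\begin{equation*}
\Aa v=\begin{pmatrix} Aw\\ -Cw\end{pmatrix}=\begin{pmatrix} w\\ -Cw\end{pmatrix}=v,
\end{equation*}
so $v$ is an eigenvector of $\Aa$ for the eigenvalue $1$. We also have $\Ca v=Cw-Cw=0$.

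By the PBH test, $(\Aa,\Ca)$ is therefore not observable. For a fully self-contained argument, one can instead check that $v$ lies in the kernel of the observability matrix, since $\Ca\Aa^j v=\Ca v=0$ for every $j$. Equivalently, $\mathrm{rank}\begin{pmatrix}\Aa-I\\ \Ca\end{pmatrix}<n+q$.

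In system terms, the initial condition $z(0)=v$ produces an identically zero output. The state mode $w$ and the attack $-Cw$ are indistinguishable. This is exactly the shared-eigenvalue phenomenon in the composite-system theorem of \cite{dav75}, where the attack subsystem $(I_q,I_q)$ contributes the eigenvalue $1$.

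No step is a real obstacle. The only point needing care is that the eigenvector is taken real, or that complex vectors are allowed in the PBH test; either way works. The hypotheses that $(A,C)$ is observable and that $q<n$ are not needed for this direction. They only make the statement meaningful: without the eigenvalue $1$, unobservability would not follow in general. If $Cw=0$ the argument still holds, although observability of $(A,C)$ actually guarantees $Cw\neq 0$, so the indistinguishable attack $-Cw$ is nonzero.
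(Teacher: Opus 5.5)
Your proof is correct, and it takes a different (though dual) route from the paper. The paper row-reduces the observability matrix $\O$ by subtracting consecutive block rows. This leaves $\begin{pmatrix} C & I_q\end{pmatrix}$ on top and blocks $CA^{j}(A-I_n)$ with zero attack columns below, and the paper then infers rank deficiency from the singularity of $A-I_n$. You instead exhibit the explicit null vector $v=(w^\top,-(Cw)^\top)^\top$ with $Aw=w$, and conclude via PBH or via $\Ca\Aa^j v=0$ for all $j$.

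The link between the two is direct. Right-multiplying the paper's reduced matrix by the invertible $\begin{pmatrix} I_n & 0\\ -C & I_q\end{pmatrix}$ gives $\begin{pmatrix} 0 & I_q\\ \mathcal{N} & 0\end{pmatrix}$, where $\mathcal{N}$ stacks the blocks $CA^{j}(A-I_n)$. Its rank is therefore $q+\mathrm{rank}(\mathcal{N})\le q+\mathrm{rank}(A-I_n)<n+q$, and your $v$ is exactly the image of $(w^\top,0)^\top$ under that matrix.

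Your version is fully explicit and avoids the rank bookkeeping that the paper's final sentence leaves implicit: that the $I_q$ block contributes exactly $q$ to the rank, and that $\mathrm{rank}(\mathcal{N})\le\mathrm{rank}(A-I_n)$. It also names the unobservable direction: a constant state mode $w$ masked by the attack $-Cw$, which is nonzero by observability of $(A,C)$. This is precisely the ambiguity the $\ell_1$ prior is later meant to resolve.

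The paper's reduction, in turn, exposes the whole rank formula. It therefore also shows the converse at a glance: $\O$ has full column rank for $\tau\ge n+1$ when $(A,C)$ is observable and $1$ is not an eigenvalue of $A$. PBH gives this converse too, with a short case analysis on the eigenvalue.
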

\begin{pf}
We define the observability matrix of the augmented system described by $(\Aa,\Ca)$ as
\begin{equation}\label{eq:ob}
 \O = \begin{pmatrix}
 C&I_q\\
 CA&I_q\\
 \vdots&\vdots\\
 CA^{\tau-1}&I_q
\end{pmatrix}
\end{equation}
where $\tau$ is the number of acquired measurements. By suitable row reduction of $\O$, we obtain
\begin{equation}\label{eq:ob2}
 \begin{pmatrix}
  C&I_q\\
 C(A-I_n)&0\\
 \vdots&\vdots\\
 CA^{\tau-2}(A-I_n)&0
\end{pmatrix}
\end{equation}
If $A$ has the eigenvalue $1$, then  $A-I_n$ has the eigenvalue $0$. Thus, the matrix in \eqref{eq:ob2} and $\O$ are not full rank, and the system is not observable.
\end{pf}

The case where $A$ possesses an eigenvalue equal to $1$ encompasses all LTI dynamical systems characterized by a constant natural mode. For instance, stochastic matrices - widely employed to model, e.g., consensus dynamics in multi-agent systems, Markov chains, and web search algorithms - exhibit a Perron-Frobenius eigenvalue equal to $1$. Another example of a system matrix featuring this property is the IEEE 14-bus power network, a widely adopted benchmark in CPS literature; see, e.g., \cite{pas13}. Nevertheless, this unobservable case has not been examined in the literature. For example, \cite{sun26} exclude it from their analysis.

\subsection{Basis Pursuit and Lasso formulation}
If $a(k)=\atrue$ for each $k$, a batch approach to SSE consists in solving $y=\O z$ to recover $z(0)=\begin{pmatrix}x(0)^\top,&\atrue^\top\end{pmatrix}^\top.$, where $y=\begin{pmatrix} y(0)^\top,&\dots,& y(k-1)^\top\end{pmatrix}^\top$.
Nevertheless, if CPS \eqref{system2} is not observable this equation has infinitely many solutions and the estimation fails. To address this issue, one can exploit the attack sparsity by introducing an $\ell_1$  minimization or regularization. The $\ell_1$ norm is the best convex approximation of the sparsity level of a vector and provides an effective tool to sparsify the solution of an underdetermined linear system. This framework has been widely studied in compressed sensing, where vectors admitting sparse representations are reconstructed from sub-Nyquist measurements via convex optimization problems, such as Basis Pursuit and Lasso; see \cite{fou13}. We notice that subsampling in static systems structurally mirrors unobservability in dynamic systems: both yield non-injective observation mappings that hide critical state information in the null space.

Let $\reg(z)=\sum\limits_{i=1}^{n+q}\Lambda_i|z_i|$ be the weighted $\ell_1$ norm. In our setting, the $\ell_1$ norm must be applied only on $a$, thus we naturally set $\Lambda_i = \lambda>0$ for $i=1,\dots,n$, and $\Lambda_i=0$ for $i>n$. 

The Basis Pursuit formulation is
\begin{equation}\label{bp}
 \zstar=\argmin{z\in\R^{n+q}} ~~\reg(z) ~\text{ s.t.} ~y = \O z.\end{equation}
The Lasso formulation is
\begin{equation}\label{lasso}
\zstar=\argmin{z\in\R^{n+q}}~~ \frac{1}{2}\left\|y - \O z\right\|_2^2+\reg(z).
\end{equation}
 The Basis Pursuit is well-posed under the null-space property, see \cite{fou13}, while Lasso guarantees the correct identification of the attacks under the irrepresentable condition; see  \cite{fox24}.
The least-squares formulation of Lasso  may envisage the presence of measurement noise. In this work, we focus on the noise-free case, then both formulations are suitable.

Our goal is to analyze and develop methods for the online solution of problems \eqref{bp} and \eqref{lasso}.

\section{Online sparse observers}\label{sec:PA}
We design online sparse observers by starting from recursive methods to solve problems \eqref{lasso} and \eqref{bp}. An observer is said to be "sparse" if it exploits prior information on sparsity; this generally yields nonlinear observers. Moreover, we call "online" an observer that uses only the current measurement $y(k)$ to estimate the current state.

The setting is as follows. Each time measurement provides a subset of equations of $y=\O z$, which are immediately processed and then eliminated from the storage. We apply recursive methods that, each $k$, provide new estimates $\xhat(k)$ of $x(0)$ and $\ahat(k)$ of  $\atrue$; then we estimate the current state $x(k)$ as $A^k \xhat(k)$.

\subsection{Online proximal gradient descent}
Lasso can be solved by proximal gradient descent (PGD), also known as iterative soft-thresholding algorithm; see \cite{dau04}. When  measurements are streaming, an online version of PGD, hereafter denoted as O-PGD, can be applied. \cite{duc10} analyze O-PGD in terms of static regret, by assuming that the dynamics of the system is unknown and revealed at each $k$.


Let us define
\begin{equation}
\fun_k(z) =  \frac{1}{2}\|\Ca \Aa^k z-y(k)\|_2^2 = \frac{1}{2}\|C A^k x+a-y(k)\|_2^2.
\end{equation}
Then, the Lasso problem  \eqref{lasso} can be written as
\begin{equation}\label{lasso2}
\zstar=\argmin{z\in\R^{n+q}} \frac{1}{\tau}\sum_{k=0}^{\tau-1} \fun_k(z) +\reg(z).
\end{equation}
O-PGD iteration consists in minimizing a first-order approximation of $\fun_k$ around the current estimate $\zhat(k)$, plus $r(z)$ and a proximal regularization term $\frac{\eta}{2} \|z-\zhat(k)\|_2^2$, $\eta>0$.
In formulas,
\begin{equation}\label{opgd}
\begin{split}
&\zhat(k+1)=\\
&= \argmin{z\in\R^{n+q}} \langle \nabla \fun_k(\zhat(k)),z \rangle+\reg(z)+ \frac{\eta}{2} \|z-\zhat(k)\|_2^2\\
 &=  \soft_{\frac{\lambda}{\eta}}\left( \zhat(k) - \frac{1}{\eta}(\Ca \Aa^k)^\top \left(\Ca \Aa^k \zhat(k)-y(k)\right) \right)\\
\end{split}
\end{equation}
where  $\soft$ is the componentwise soft-thresholding operator, defined by $\soft_{\nu}(w)=w-\nu\sign(w) $ if $|w|\geq \nu$, and $0$ otherwise, for $w\in\R$. Throughout the paper, we use the notation $\soft_{\nu}(z)$ by meaning that soft-thresolding applies only to the last $q$ components of $z$, as the $\ell_1$ regularization is only on the attack part of the augmented state.

We notice that by enforcing closeness to $\zhat(k)$, proximal regularization transfers information from past measurements to the current update and prevents the algorithm from discarding knowledge accumulated from earlier observations.
\cite{duc10} prove that O-PGD enjoys a sublinear static regret, meaning that  $\fun_k(\zhat(k)) +\reg(\zhat(k))$ converges on average to $\frac{1}{\tau}\sum\limits_{k=0}^{\tau-1} \fun_k(\zstar) +\reg(\zstar).$


\subsection{Online proximal alternating minimization}
While O-PGD exploits the linearization of $\fun_k$ to deal with the presence of the $\ell_1$ regularization, we should notice that in our Lasso problem the variable $a$, interested by the $\ell_1$ regularization, is decoupled in $\fun_k$. Therefore, we can proceed without linearization, by alternating minimization over $x$ and $a$. Both minimizers can be uniquely computed with a straightforward closed form, provided that the proximal  regularization term $\frac{\eta}{2} \|z-\zhat(k)\|_2^2$ is added. The convergence of this proximal alternating minimization (PAM) method is proven by \cite{ban18} in the Jacobi form and by \cite{att08} in the Gauss-Seidel form for batch problems. 

We propose to extend Jacobi PAM for online SSE by using only the current $\fun_k$; we denote this approach as online PAM (O-PAM). Let us define  $$M_k=\left((CA^k)^\top CA^k+\eta I_n\right)^{-1}\in\R^{n,n}.$$

Then, O-PAM iteration is as follows.
\begin{equation}\label{opam}
 \begin{split}
\xhat(k+1)&=\argmin{x\in\R^n}\fun_k(x,\ahat(k))+\frac{\eta}{2}\|x-\xhat(k)\|_2^2\\
 &= M_k\left(\eta \xhat(k) +(CA^k)^\top(y(k)-\ahat(k))\right)\\
 &= \xhat(k)+ M_k(CA^k)^\top \left(y(k)-\ahat(k)-CA^k\xhat(k)\right),\\
 &\\
\ahat(k+1)&=\argmin{a\in\R^q}~\fun_k(\xhat(k),a)+\lambda\|a\|_1+\frac{\eta}{2}\|a-\ahat(k)\|_2^2\\
 &=\soft_{\frac{\lambda}{1+\eta}}\left( \frac{y(k)-CA^k \xhat(k)+ \eta \ahat(k)}{1+\eta} \right)\\
 &= \soft_{\frac{\lambda}{1+\eta}}\left(\ahat(k)+ \frac{y(k)-\ahat(k)-CA^k\xhat(k)}{1+\eta} \right).
 \end{split}
\end{equation}
In compact form, O-PAM is
\begin{equation}\label{OPAMcompact}
\zhat(k+1)=\soft_{\frac{\lambda}{1+\eta}}\left(\zhat(k)- P_k \left(\Ca \Aa^k \zhat(k)-y(k)\right)\right)
\end{equation}
where $$P_k=\begin{pmatrix}
M_k (CA^k)^\top\\ 
\frac{1}{1+\eta}I_q
\end{pmatrix}.$$
Interestingly, O-PAM in \eqref{OPAMcompact} and O-PGD in \eqref{opgd} share the same structure. The difference is that, in O-PAM, $P_k$ replaces $(\Ca \Aa^k)^\top$ used in O-PGD. More specifically, O-PAM employs a $\eta$-regularized pseudo-inverse $M_k(CA^k)^\top$ of $CA^k$ rather than its transpose $(CA^k)^\top$. 

Since O-PAM does not rely on a first-order approximation, it is expected to converge in fewer iterations than O-PGD. However, each iteration of O-PAM requires inverting an $n \times n$ matrix, which may limit its scalability to large-dimensional CPSs.


\subsection{Block linearized Bregman and sparse Kaczmarz}
The linearized Bregman method  addresses problem \eqref{bp}. While the structure of the algorithm is very similar to PGD, Linearized Bregman converges to the minimum of an elastic-net Basis pursuit, where the cost function is the sum of $\ell_1$ norm and squared $\ell_2$ norm; see \cite{yin08,osh10}.

\cite{lor14, lor14b} extend Bregman methods to  online sparse optimization problems with streaming data and establish convergence guarantees for a family of block linearized Bregman methods, including sparse Kaczmarz algorithms. These results are also applicable to our framework.

The block linearized Bregman method corresponds to O-PGD in \eqref{opgd} with the $\ell_1$ term replaced by the $\ell_1$ Bregman divergence $\|a\|_1-\|\ahat(k)\|_1-\langle a-\ahat(k),p(k) \rangle$, $p(k)$ being a subgradient of $\|\ahat(k)\|_1$.
By straightforward algebraic manipulations - see, e.g., Section 5.3 of \cite{yin08} - the algorithm can be expressed as
\begin{equation}\label{linbreg}
\zhat(k+1)=   \zhat(k) - \frac{1}{\eta}(\Ca \Aa^k)^\top \left(\Ca \Aa^k ~\soft_{\frac{\lambda}{\eta}}\left(\zhat(k)\right)-y(k)\right).
\end{equation}
In a nutshell, the position of the soft-trhresholding operator changes with respect to O-PGD.

Based on this consideration, we propose the following Bregman variant of O-PAM (hereafter denoted as PAM Bregman):
\begin{equation}\label{bregpam}
\zhat(k+1)= 
    \zhat(k) - P_k \left(\Ca \Aa^k ~\soft_{\frac{\lambda}{1+\eta}}\left(\zhat(k)\right)-y(k)\right).
\end{equation}

In \eqref{linbreg}, $\soft_{\frac{\lambda}{\eta}}\left(\zhat(k)\right)$ converges to the minimum of the elastic-net version of \eqref{bp}; see \cite{lor14}.

By the same arguments used in Theorem 5.3 by \cite{osh10}, we can prove that, if it converges, PAM Bregman achieves a solution of $y=\O z$. On the other hand, proving its convergence is challenging because $P_k \Ca \Aa^k$ is not symmetric as $(\Ca \Aa^k)^\top(\Ca \Aa^k)$. As a result, the arguments used in Theorem 3.2 of \cite{osh10} cannot be applied directly.

The sparse Kaczmarz algorithm is a variant of \eqref{linbreg} that updates the estimate by using a single scalar measurement at each iteration; see \cite{lor14,lor14b} for details.

\subsection{Aggregated O-PGD and O-PAM}
In the framework of sequential processing of large datasets in machine learning, \cite{van18} propose an aggregated O-PGD method (AO-PGD) which achievs linear convergence. The main idea of this approach is to aggregate past information by considering the gradient of $\sum_{h=0}^{k}\fun_h$ instead of a single $\fun_k$. This requires no extra memory, but the increment of $\sum_{h=0}^k (\Ca \Aa^h)^\top (\Ca \Aa^h)$ and $\sum_{h=0}^k (\Ca \Aa^h)^\top y(h)$ at each iteration. The same idea can be extended to O-PAM, by considering $P_h$ instead of $(\Ca \Aa^h)^\top$; we call this variant AO-PAM.



\section{Numerical results}\label{sec:NR}
In this section, we assess the performance of the proposed online observers through Monte Carlo
simulations. We compare O-PGD, O-PAM, their aggreated versions AO-PGD and AO-PAM, 
the block linearized Bregman method (Lin.~Bregman), its PAM variant (PAM Bregman), and the sparse Kaczmarz observer. 

At each iteration $k$, every algorithm updates the estimate $\zhat(k)=(\xhat(k)^\top,\ahat(k)\top)^\top$ of $x(0)$ and $\atrue$, and reconstructs the current state as
$A^{k}\xhat(k)$.
\subsection{Experimental setup}
We consider three families of CPSs, all sharing the feature of
Proposition~1.

(i)~~\emph{Multi-agent dynamics.} We consider a multi-agent network with $n=50$ nodes. The topology is illustrated in Fig. \ref{fig:g}: the network consists of two ring clusters and each cluster has a leader. $A$ is a row-stochastic matrix
associated with this topology, and, as a consequence, has a double eigenvalue equal to 1.
The system dynamics, shown in Fig. \ref{fig:ev}, converge to consensus on the leader's state within each cluster. We assume that $q=20$ sensors measure the state through the matrix $C$ that has i.i.d. Gaussian entries; moreover, $h=4$ sensors are subjected to attack. For this family the triple $(A,C,x(0))$
is kept fixed, while the location of the attacked sensors is randomized across the runs.

(ii)~~\emph{Marginally stable dynamics.} $A$ is randomly generated and normalized to have a unique eigenvalue equal to 1 and all the other eigenvalues strictly inside the unit disk.

(iii) \emph{Mildly unstable dynamics.} $A$ has an 
unstable eigenvalue ($\irr(A)\approx 1.0005$), two marginally stable
modes ($\pm 1$), and the remaining eigenvalues strictly inside the
unit disk.

For (ii)--(iii) a new realization of $A$ is drawn at each
run. In all the experiments $C\in\R^{q\times n}$ has i.i.d.\ Gaussian
entries (with $n=40$, $q=20$ for (ii)--(iii); for family~(i) the
dimensions depend on the network model), the attack support is
drawn uniformly at random with cardinality $h=5$, and the non-zero
attack entries and the entries of $x(0)$ have random signs and
magnitudes in $[1,2]$ and $[2,3]$, respectively. We report noise-free
experiments. We set to $\lambda=0.6$
for the proximal-gradient and alternating-minimization methods; for the
Bregman and Kaczmarz observers, we consider $20\lambda$. All curves are averaged over $50$ independent runs.

\begin{figure}[t]
\centering
\includegraphics[width=0.94\columnwidth]{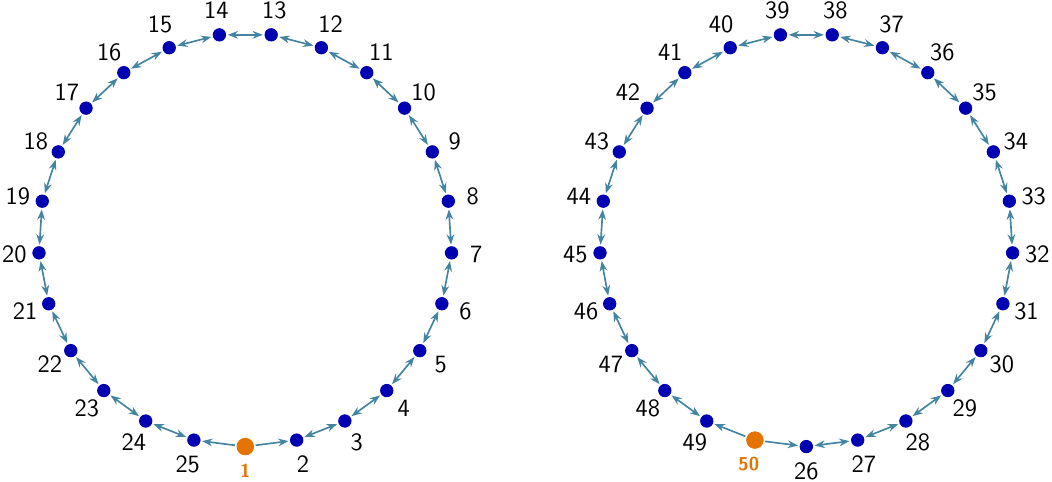}
\caption{Example (i): topology of the network. We consider two disconncted subnetworks with ring topology and a leader. The corresponding matrix $A\in\R^{n,n}$, $n=50$, is  row-stochastic with a double eigenvalue equal 1.}
\label{fig:g}
\end{figure}

\begin{figure}[t]
\centering
\includegraphics[width=0.94\columnwidth]{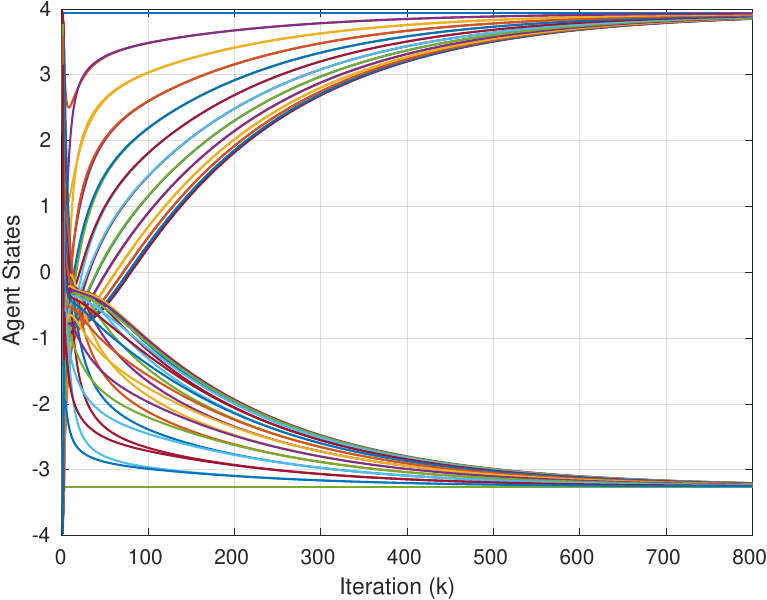}
\caption{Example (i): dynamics of the multi-agent CPS.}
\label{fig:ev}
\end{figure}

\subsection{Performance metrics} 
At each $k$ we monitor two quantities. The first is the mean squared state
estimation error $\tfrac{1}{n}\|x(k)-\xhat(k)\|_2^2$. The second is the quality of the support recovery of the attack: a sensor is declared
attacked when $|\ahat_i(k)|>\theta$ for a fixed threshold $\theta$. We define the support attack error as the total number of
misclassified sensors, i.e., the sum of false positives (sensors erroneously flagged as attacked) and false negatives (missed attacks).

\subsection{Results and discussion}
In examples (i) and (ii),
all the seven observers successfully identify the attacked sensors:
the support error drops to zero within a few tens of iterations
(Figs.~\ref{fig:caso_1_support},~\ref{fig:caso_2_support}),
confirming that $\ell_1$ regularization can compensate for the lack of
observability established by Proposition~1. On both examples,
the algorithms differ markedly in state estimation accuracy
(Figs.~\ref{fig:caso_1_state},~\ref{fig:caso_2_state}): as guaranteed by the theory, the
Bregman/Kaczmarz observers (Lin.~Bregman, PAM~Bregman, Sparse
Kaczmarz) are unbiased, and achieve null estimation error once that attacks are correctly identified.



The mildly unstable family~(iii) provides the most challenging
scenario, as the unstable mode amplifies any residual estimation error
through $A^k$. On this family the three Bregman/Kaczmarz observers
stand out: Lin.~Bregman, PAM~Bregman, and Sparse Kaczmarz recover
the attack support exactly and achieve the lowest state estimation
error by a wide margin
(Figs.~\ref{fig:caso_3_state},~\ref{fig:caso_3_support}). The
remaining four methods still converge, but retain a small residual
support error and a correspondingly higher state estimation floor,
reflecting the progressive ill-conditioning of the per-step Gram
matrix $(\Ca\Aa^k)^\top(\Ca\Aa^k)$ to which their updates are more
sensitive.
Overall, the Bregman/Kaczmarz observers consistently provide the best
trade-off between support-recovery accuracy and state estimation
accuracy across all the three families, and are the only methods that
achieve exact support recovery on the unstable family~(iii). The
proximal-gradient and alternating-minimization methods perform well on stable dynamics, but exhibit a mild degradation on unstable dynamics.

Finally, Figs. \ref{fig:FP1}, \ref{fig:FN1}, \ref{fig:FP2}, \ref{fig:FN2}, \ref{fig:FP3}, and \ref{fig:FN3} report the numbers of false positives and false negatives for all the considered examples. These results highlight that Bregman-based approaches reduce the number of false positives during the transient phase if compared to the other methods. This behavior stems from the unbiased nature of the Bregman formulation, which allows the weight of the $\ell_1$ regularization term to remain relatively large without compromising the final estimate. Consequently, Bregman-based algorithms are more conservative when declaring that a sensor is under attack, and such declarations are typically definitive. This characteristic is particularly desirable in applications where the countermeasures triggered by attack detection are irreversible.

\begin{figure}[t]
\centering
\includegraphics[width=0.94\columnwidth]{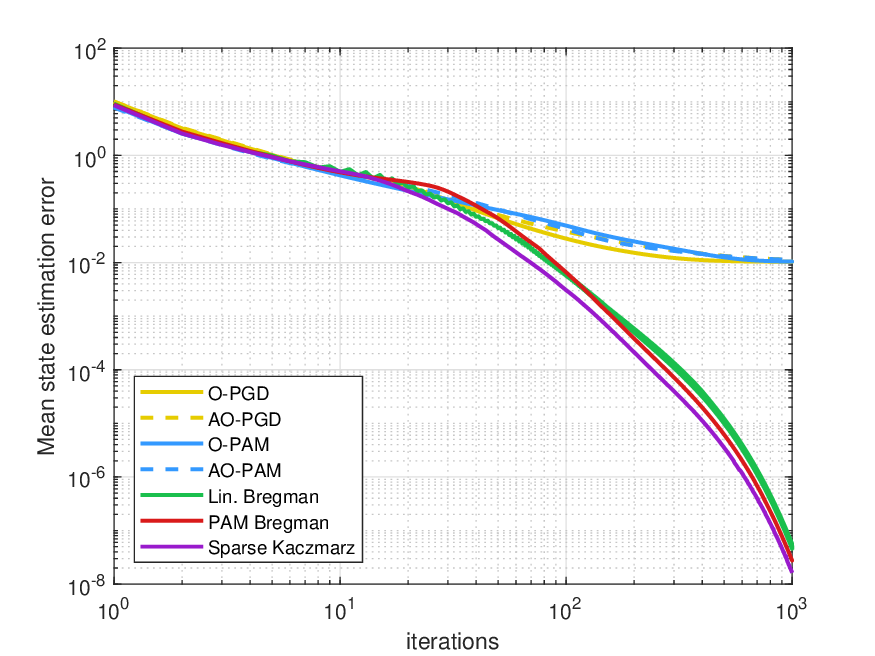}
\caption{Example (i): state estimation error.}
\label{fig:caso_1_state}
\end{figure}
 
\begin{figure}[t]
\centering
\includegraphics[width=0.94\columnwidth]{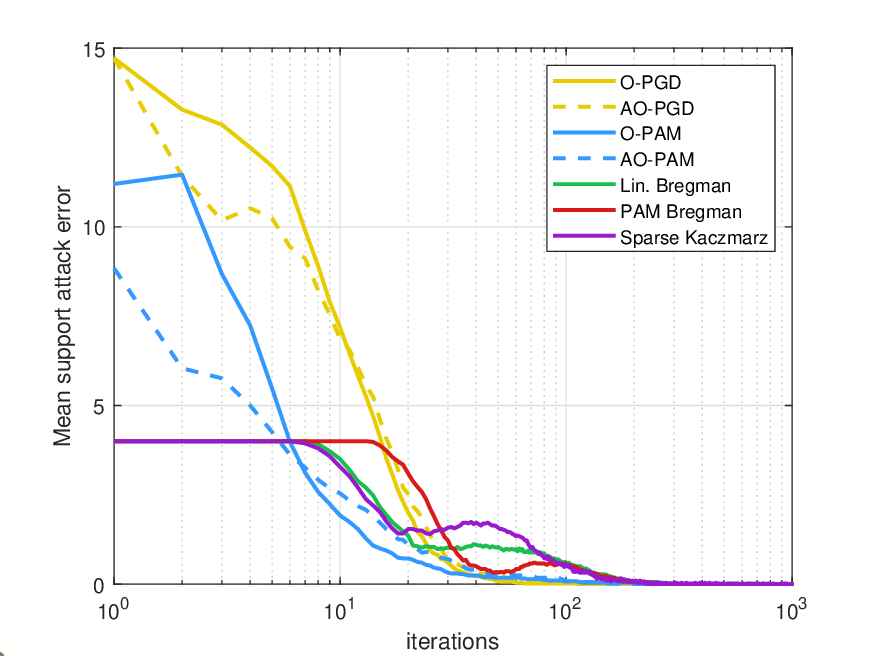}
\caption{Example (i): support attack error.}
\label{fig:caso_1_support}
\end{figure}

\begin{figure}[t]
\centering
\includegraphics[width=0.94\columnwidth]{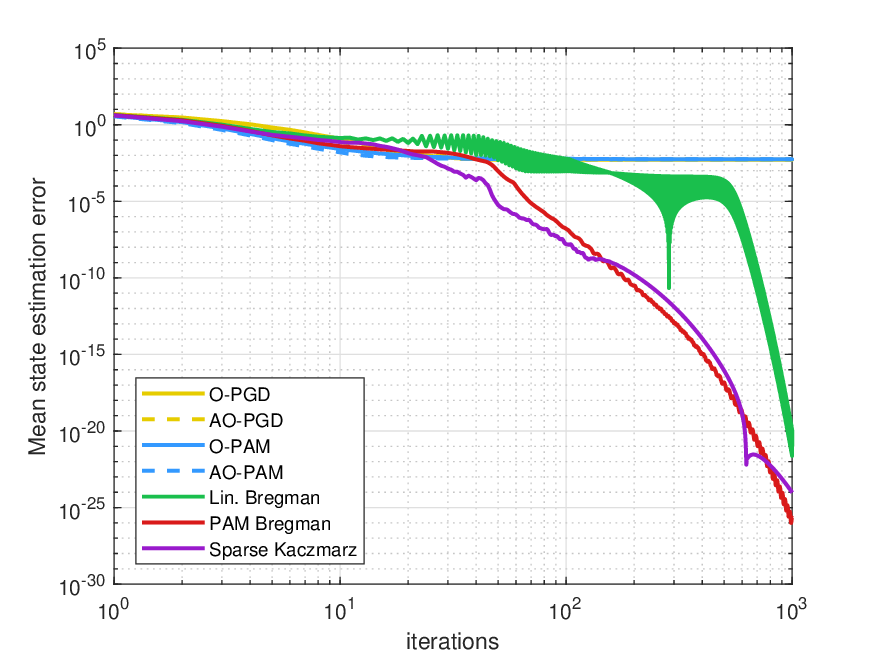}
\caption{Example (ii): state estimation error.}
\label{fig:caso_2_state}
\end{figure}
 
\begin{figure}[t]
\centering
\includegraphics[width=0.94\columnwidth]{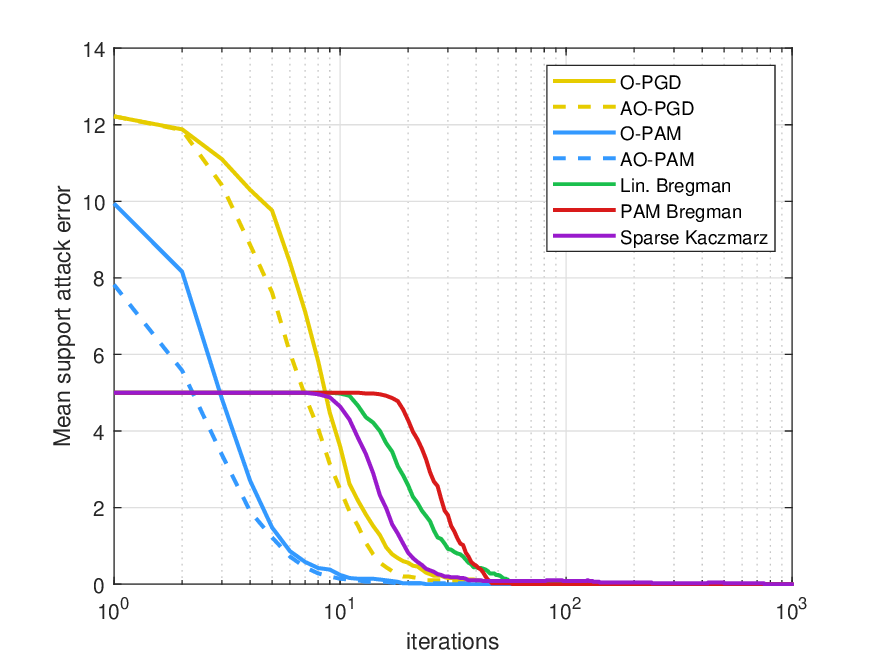}
\caption{Example (ii): support attack error.}
\label{fig:caso_2_support}
\end{figure}

\begin{figure}[t]
\centering
\includegraphics[width=0.94\columnwidth]{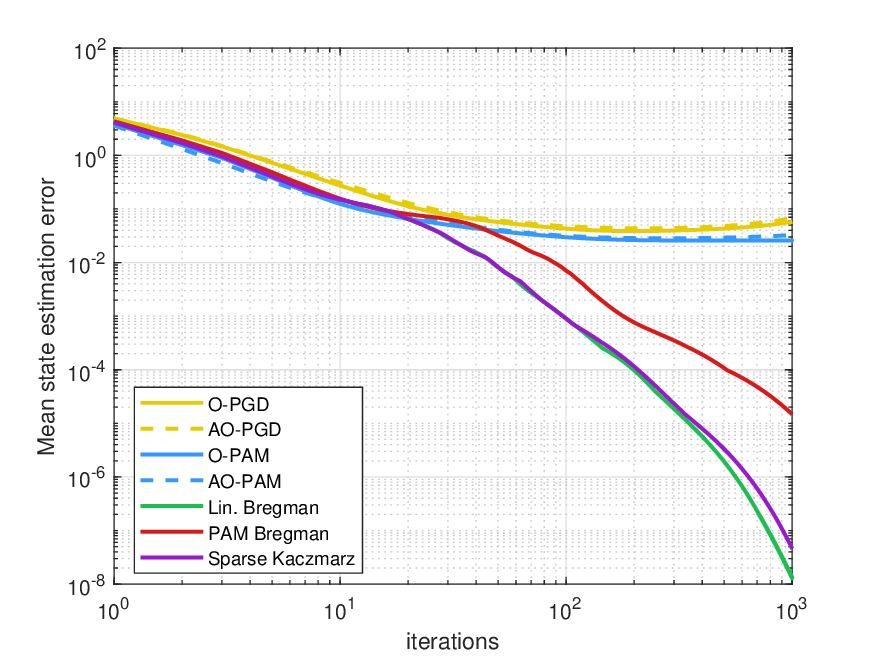}
\caption{Example (iii): state estimation error.}
\label{fig:caso_3_state}
\end{figure}
 
\begin{figure}[t]
\centering
\includegraphics[width=0.94\columnwidth]{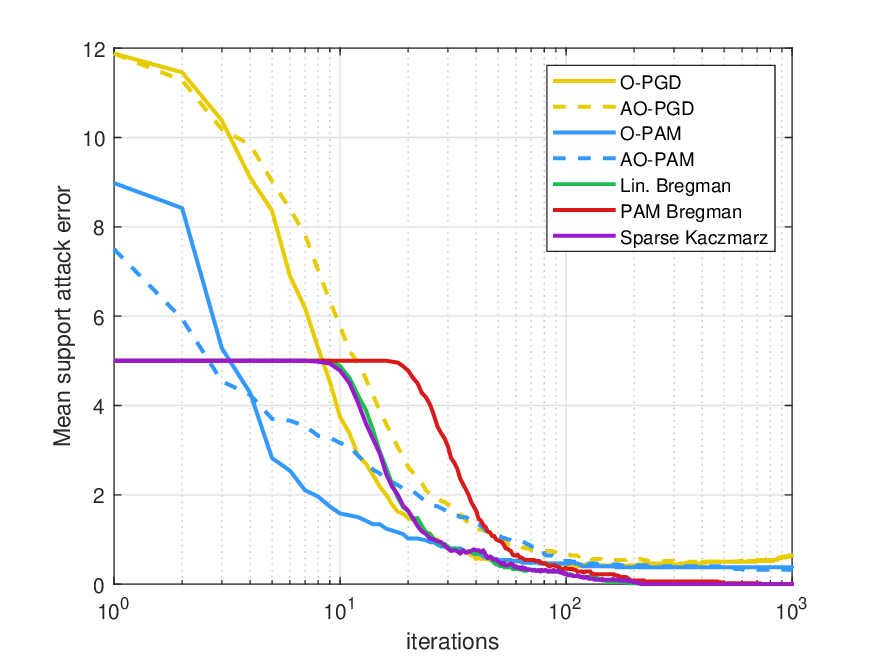}
\caption{Example (iii): support attack error.}
\label{fig:caso_3_support}
\end{figure}

\begin{figure}[t]
\centering
\includegraphics[width=0.94\columnwidth]{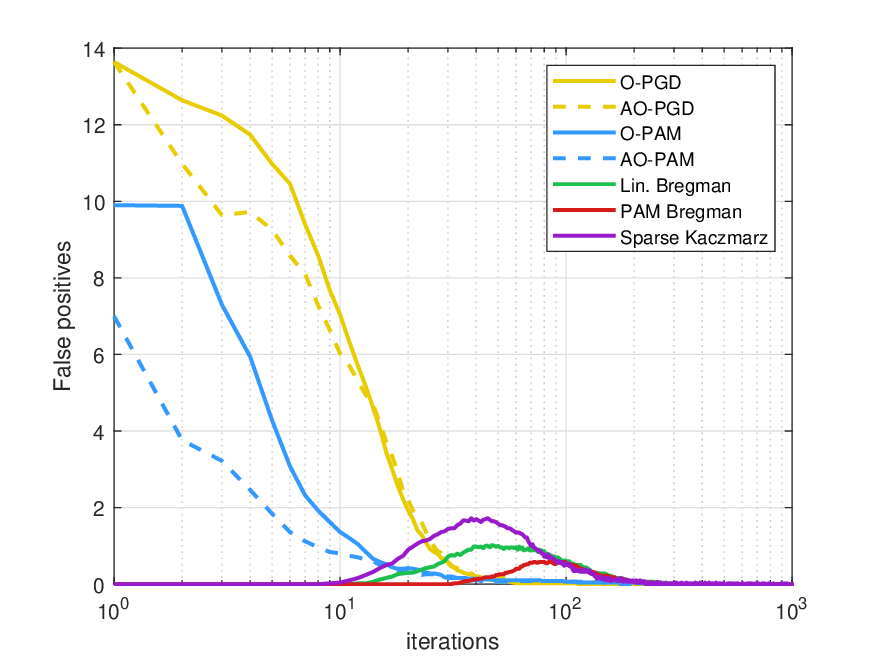}
\caption{Example (i): false positives.}
\label{fig:FP1}
\end{figure}
 
\begin{figure}[t]
\centering
\includegraphics[width=0.94\columnwidth]{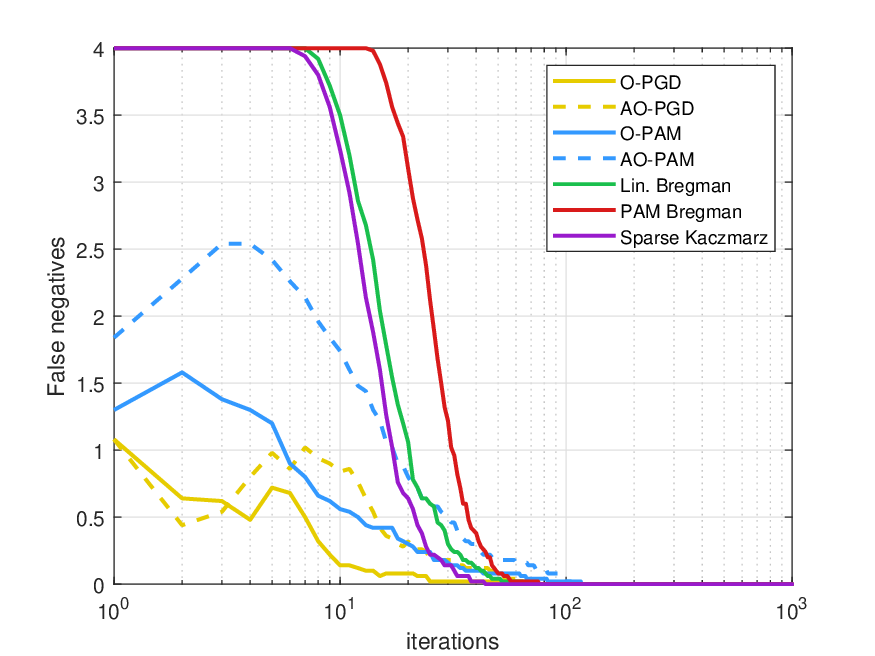}
\caption{Example (i): false negatives.}
\label{fig:FN1}
\end{figure}

\begin{figure}[t]
\centering
\includegraphics[width=0.94\columnwidth]{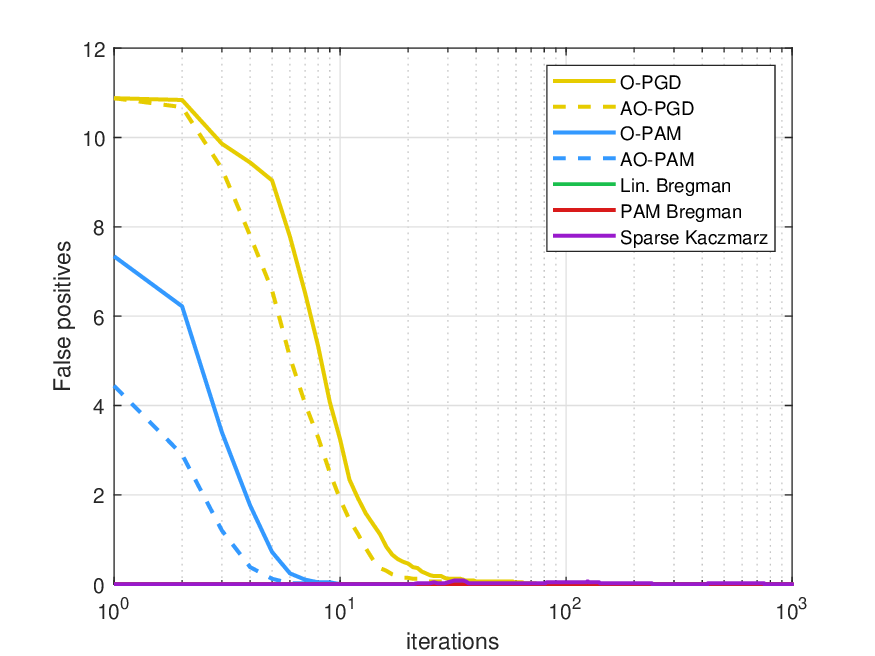}
\caption{Example (ii): false positives.}
\label{fig:FP2}
\end{figure}
 
\begin{figure}[t]
\centering
\includegraphics[width=0.94\columnwidth]{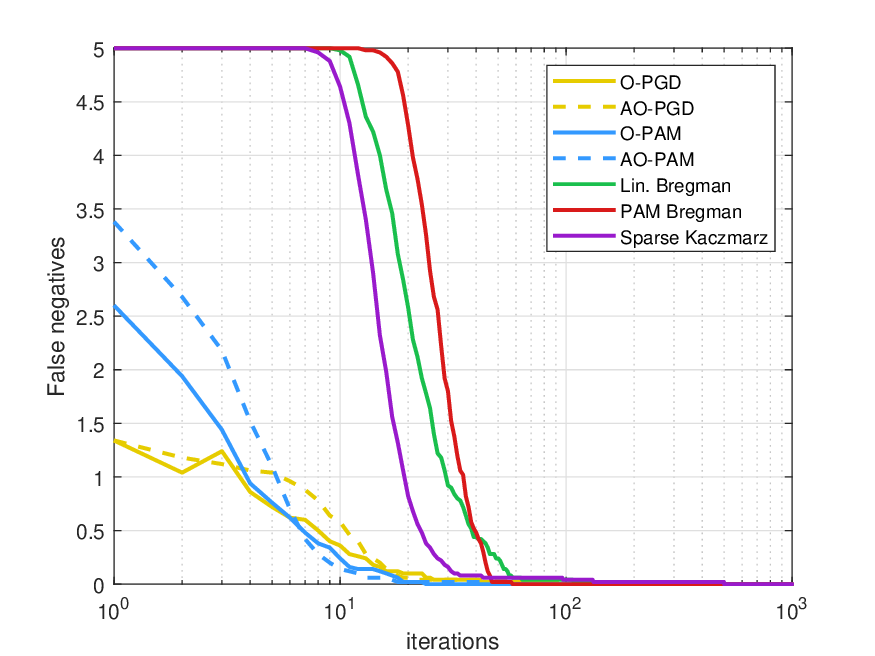}
\caption{Example (ii): false negatives.}
\label{fig:FN2}
\end{figure}

\begin{figure}[t]
\centering
\includegraphics[width=0.94\columnwidth]{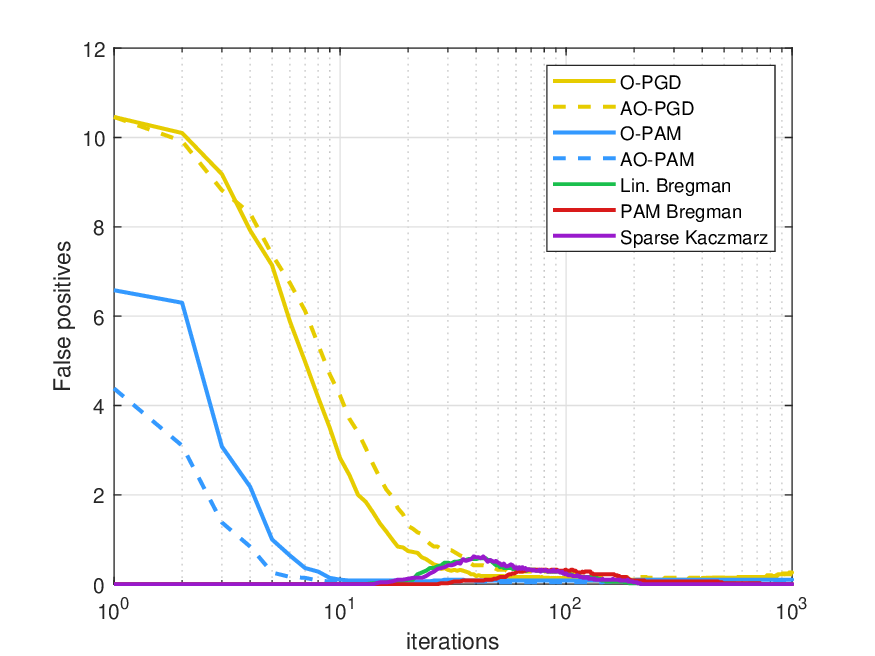}
\caption{Example (iii): false positives.}
\label{fig:FP3}
\end{figure}
 
\begin{figure}[t]
\centering
\includegraphics[width=0.94\columnwidth]{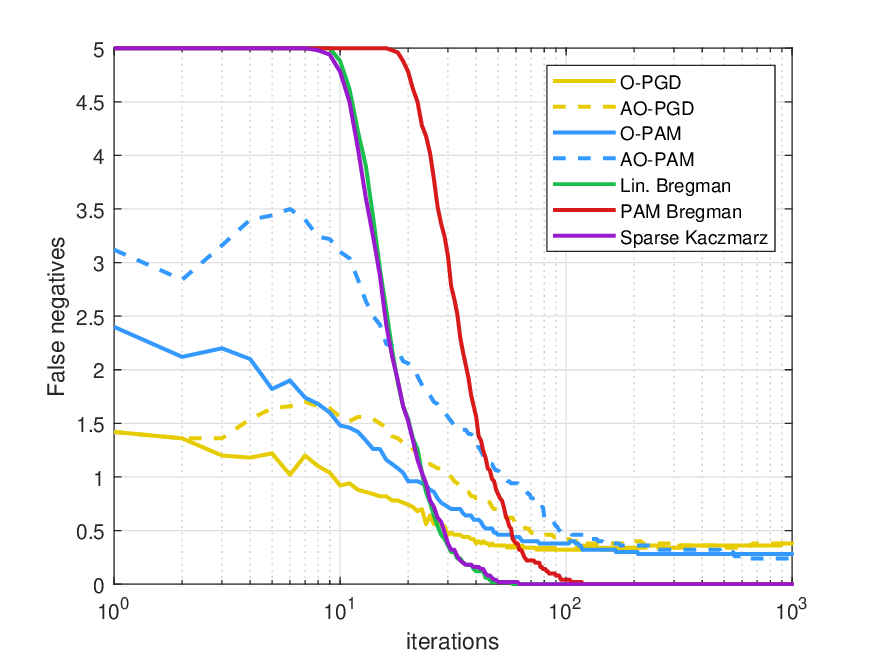}
\caption{Example (iii): false negatives.}
\label{fig:FN3}
\end{figure}

\section{Conclusions}\label{sec:C}
In this paper, we analyze a class of sparse observers for online secure state estimation of cyber-physical systems. We consider systems with time-invariant, sparse sensor biases, which can represent, e.g., undesired human intervention or malicious attacks. 

Among the algorithms considered, the observers based on Bregman iterations exhibit superior  estimation accuracy across the numerical experiments. 

The convergence analysis of the proposed Bregman-based proximal alternating minimization algorithm remains an open problem and is the subject of ongoing research.
\bibliography{cps}
\end{document}